\documentclass[runningheads]{llncs}

\usepackage[T1]{fontenc}
\usepackage{amsmath,amssymb,amsfonts}
\usepackage{graphicx}
\usepackage{booktabs}
\usepackage[ruled,vlined]{algorithm2e}
\usepackage{multirow}
\usepackage{threeparttable}
\usepackage{siunitx}
\usepackage[hidelinks]{hyperref}
\usepackage{xcolor,colortbl}
\usepackage{orcidlink}

\definecolor{LightCyan}{rgb}{0.88,1,1}

\begin{document}

% ----- Title page ----------------------------------------------------
\title{When AI Agents Meet MEV:\\Cross-Chain Arbitrage in the Agentic Economy}
\titlerunning{AI Agents and Cross-Chain MEV}

% Anonymized for double-blind review
% \author{Anonymous Authors}
% \authorrunning{Anonymous}
% \institute{Anonymized for review}

\author{
    Wei Ye\inst{1}\orcidlink{0009-0002-6632-377X}
    \and
    Jingyan Xu\inst{2}\orcidlink{0009-0006-0864-6035}
    \and
    Yuanhong Wu\inst{2}\orcidlink{0009-0000-3226-9547}
}
\authorrunning{W. Ye et al.}
\institute{
    Department of Economics, Fordham University, New York, NY, USA\\
    \email{wei.ye@fordham.edu}
    \and
    Department of Computer and Information Science, Fordham University, New York, NY, USA\\
    \email{\{jxu246,ywu463\}@fordham.edu}
}

\maketitle

% ----- Abstract ------------------------------------------------------
\begin{abstract}
We study cross-chain arbitrage when autonomous AI agents, rather than
humans or bots, are the searchers. We model agents as both arbitrage
extractors and Maximal Extractable Value targets, derive the
optimal trade size for a risk-averse agent under mean-variance utility
with stochastic bridge delays, and formalize multi-chain path selection
as a belief-weighted online learning problem whose belief estimates
converge under a Robbins--Monro schedule. Using 23{,}000 Uniswap~V3
swap events across Ethereum, Arbitrum, and Base, we find that
Ethereum--Arbitrum price gaps average 0.044\% at 10-second resolution
and Arbitrum--Base gaps average 0.013\%, so \$10{,}000 trades clear in
63\% of L2--L2 windows via CCTP while L1--L2 routes require \$50{,}000
or more for comparable viability. Our adaptive path-selection algorithm
outperforms standard baselines by 11\% on average, and moderate
randomization cuts MEV exposure by over 50\%
with only modest profit loss.

\keywords{Cross-chain arbitrage \and MEV \and AI agents \and DeFi}
\end{abstract}

% =====================================================================
\section{Introduction}
\label{sec:intro}

By early 2026, autonomous AI agents had completed over 140 million
on-chain payments totaling \$43 million in USDC, transacting through
infrastructure such as Coinbase's Agentic Wallets and Circle's
Nanopayments. These figures mark the emergence of an \emph{agentic
economy}, in which AI agents act as independent economic participants
that initiate transactions and execute financial strategies without
human oversight.

Yet the intersection of agentic AI and Maximal Extractable Value (MEV)
remains unexplored. MEV research models human or bot searchers on a
single chain and ignores the behavioral signatures of autonomous
agents. Work on agentic payments, in turn, emphasizes portfolio
management and yield optimization but overlooks the adversarial
execution environment in which every pending transaction is visible
and exploitable. The features that make agents effective
searchers---speed, cross-chain awareness, and continuous
monitoring---also expose their algorithmic regularities to sandwich
attacks and front-running. Agents therefore operate simultaneously as
MEV extractors and targets, a duality that existing cross-chain MEV
models obscure further by abstracting away the transaction costs that
often render arbitrage infeasible.

This paper bridges agentic AI and cross-chain MEV with five
contributions.
   \textbf{Agent-aware MEV framework:} We model cross-chain MEV
    with AI agents as economic participants, treating them as both
    searchers and targets.
  \textbf{Optimal arbitrage under uncertainty:} We derive the
    optimal trade size for a risk-averse agent under mean-variance
    utility, accounting for stochastic bridge delays, price volatility,
    and transaction costs.
  \textbf{Adaptive path selection:} We formalize multi-chain
    path selection as a belief-weighted online learning problem and
    establish its convergence.
  \textbf{Vulnerability analysis:} We characterize how
    deterministic agent behavior amplifies susceptibility to SDA
    attacks and propose mitigations.
  \textbf{Empirical evaluation:} Using DeFiLlama price, gas, and
    bridge data, we calibrate cross-chain price distributions, backtest
    the adaptive algorithm against baselines, and quantify how
    transaction costs shrink the set of viable arbitrage opportunities.

Three modeling choices are specific to the actors being software agents
rather than humans or ad-hoc bots. First, the risk-aversion parameter
$\lambda$ in~(7) is a \emph{design} parameter chosen by the developer,
not an estimated behavioral trait, so the comparative statics
of~(9) are prescriptive: they tell a designer how trade size should
scale with configured risk tolerance. Second, agents follow policies
that are (near-)deterministic functions of public on-chain state;
Section~4 shows that this makes the searcher itself a predictable MEV
target---a threat model absent from the human-trader MEV
literature---and Section~4.3 quantifies the resulting
protection--profitability tradeoff, which only arises for algorithmic
actors that can commit to randomization. Third, continuous multi-chain
operation motivates online belief updating rather than
one-shot optimization. The searcher-side tools are deliberately
standard; the contribution is the closed loop between searcher
optimization and target-side exposure.

\section{Related Work}\label{sec:litreview}
\textbf{MEV measurement and mitigation.}
Daian et al.~\cite{daian2020flash} formalized MEV.
Qin et al.~\cite{qin2022quantifying} quantify on-chain arbitrage,
liquidation, and sandwich extraction. Wahrst\"atter et
al.~\cite{wahrstatter2023time} and Heimbach et
al.~\cite{heimbach2023ethereum} document builder concentration under
proposer-builder separation. Heimbach and
Wattenhofer~\cite{heimbach2022eliminating} construct a sandwich-resistant
ordering rule with linear latency overhead;
Mancino and Rezzoli~\cite{mancino2025sandwiched} show that 2{,}932
sandwich attacks occurred over two months even on private routing.

\textbf{Cross-chain arbitrage and MEV.}
Obadia et al.~\cite{obadia2021unity} formalize cross-domain MEV.
\"Oz et al.~\cite{oz2025pandora} introduce the inventory-based vs.\
bridge-based (SIA/SDA) distinction; their
follow-up~\cite{oz2025cross} measures 242{,}535 cross-chain arbitrages
worth \$868.64M across nine chains, with SIA settling in 9~s versus
242~s for SDA. Li et al.~\cite{li2025walls} identify cross-chain
sandwich attacks extracting \$5.27M in two months. Heimbach et
al.~\cite{heimbach2024non} analyze non-atomic execution risk;
Milionis et al.~\cite{milionis2024automated} bound arbitrage profits
in AMMs with fees; Li et al.~\cite{li2023mev} characterize MEV under
greedy sequencing.

\textbf{Autonomous agents on blockchain.}
Alqithami~\cite{alqithami2026autonomous} surveys 3{,}000+ agent-blockchain
integrations and identifies four architectural patterns.
Xu~\cite{xu2026agent} proposes a layered economy for autonomous agents.
Li et al.~\cite{li2026a402} extend the x402 payment standard with atomic
service channels. Marino and Juels~\cite{marino2025giving} catalog
attack vectors when agents hold on-chain wallets.

\textbf{This work.}
Existing cross-chain MEV models assume human or bot searchers and
abstract away transaction costs; existing agentic-blockchain work treats
the execution environment as benign. We close both gaps: we model
autonomous agents as risk-averse cross-chain searchers with explicit
gas and bridge costs, prove convergence of an online path-selection
rule, and characterize agents' dual exposure as MEV targets.

% =====================================================================
\section{Cross-Chain Arbitrage Model}
\label{sec:model-sia}

\subsection{System Model and Problem Setup}

Consider a set of $N$ blockchains
$\mathcal{B} = \{B_1, B_2, \ldots, B_N\}$, each hosting one or more
DEXs. An autonomous AI agent $\alpha$ monitors token prices across all
chains and seeks to extract profit through cross-chain arbitrage. For
a given token pair $(A, B)$, let $p_i(t)$ denote the price of token
$A$ in terms of token $B$ on blockchain $B_i$ at time $t$. Throughout, we fix a single token pair $(A,B)$ and suppress it from
the notation; $p_i(t)$ thus abbreviates $p_i^{A/B}(t)$.

The agent $\alpha$ is characterized by the following:
A \emph{wallet state} $w_i(t)$ representing the agent's token
    holdings on each chain $B_i$;
A \emph{risk aversion parameter} $\lambda > 0$ governing the
    agent's tolerance for price uncertainty;
  A \emph{belief vector}
    $\hat{\boldsymbol{\rho}}(t) = (\hat{\rho}_{ij}(t))$ representing the
    agent's estimated success probability for each cross-chain path
    $(i, j)$.

When the agent executes arbitrage between blockchains $B_i$ and $B_j$,
it exchanges $x$ units of token $B$ for token $A$ on chain $B_i$, then
sells token $A$ on chain $B_j$ after bridging. This operation incurs
two types of costs:
   Gas fees for on-chain swaps, denoted $c_{\text{gas}}(t)$;
  Bridging fees for cross-chain asset transfer, denoted
    $c_{\text{bridge}}(t + \tau_{ij})$, where $\tau_{ij}$ is the delay
    for bridging from $B_i$ to $B_j$.

The total transaction cost is:
\begin{equation}
\label{eq:cost}
  c(t) = c_{\text{gas}}(t) + c_{\text{bridge}}(t + \tau_{ij}).
\end{equation}

The agent's single-period profit from arbitraging $x$ units of token
$B$ from chain $B_i$ to chain $B_j$ is:
\begin{equation}
\label{eq:profit}
  \pi_{ij}(x, t) = \frac{x}{p_i(t)} \cdot \mathbb{E}[p_j(t + \tau_{ij})] - x - c(t),
\end{equation}
where $x / p_i(t)$ is the quantity of token $A$ obtained on chain
$B_i$, and $\mathbb{E}[p_j(t + \tau_{ij})]$ is the expected price of
token $A$ on chain $B_j$ at the time the bridged assets arrive.

In a multi-period setting, the agent's decision at each time $t$ is
whether to execute arbitrage or wait. The value
function satisfies:
\begin{equation}
\label{eq:bellman}
V_t(s_t) = \max \begin{cases}
\displaystyle\max_{(i,j) \in \mathcal{P}} \left[ \pi_{ij}(x, t) + \delta \, \mathbb{E}[V_{t+\tau_{ij}}] \right], \\[6pt]
\delta \, \mathbb{E}[V_{t+1}(s_{t+1})]
\end{cases}
\end{equation}
where $\mathcal{P} \subseteq \{(i,j) \mid i \neq j\}$ is the set of
valid cross-chain paths, $\delta \in (0,1)$ is the discount factor,
and $s_t$ encodes the market state.

\subsection{Optimal Trade Size under Uncertainty}

The agent cannot observe $p_j(t + \tau_{ij})$ at the time of execution
due to the bridging delay. Assume the future price follows:
\begin{equation}
\label{eq:price-dist}
  p_j(t + \tau_{ij}) \sim \mathcal{N}(\mu_j, \sigma_j^2),
\end{equation}
where $\mu_j$ and $\sigma_j^2$ are the mean and variance of the
destination chain price, which the agent estimates from historical
data.

Substituting into~(\ref{eq:profit}), the expected profit is:
\begin{equation}
\label{eq:expected-profit}
  \mathbb{E}[\pi] = \frac{x}{p_i(t)} \cdot \mu_j - x - c(t),
\end{equation}
and the variance of profit is:
\begin{equation}
\label{eq:var-profit}
  \mathrm{Var}(\pi) = \left(\frac{x}{p_i(t)}\right)^2 \cdot \sigma_j^2.
\end{equation}

A risk-averse agent maximizes mean-variance utility:
\begin{equation}
\label{eq:utility}
  U(x) = \mathbb{E}[\pi] - \frac{\lambda}{2} \cdot \mathrm{Var}(\pi)
       = \frac{x \mu_j}{p_i(t)} - x - c(t) - \frac{\lambda}{2} \cdot \frac{x^2 \sigma_j^2}{p_i(t)^2}.
\end{equation}

Taking the first-order condition:
\begin{equation}
\label{eq:foc}
  \frac{\partial U}{\partial x} = \frac{\mu_j}{p_i(t)} - 1 - \frac{\lambda \sigma_j^2 x}{p_i(t)^2} = 0.
\end{equation}

Solving for $x$ yields the optimal trade size:
\begin{equation}
\label{eq:optimal-x}
  x^* = \frac{p_i(t)\bigl(\mu_j - p_i(t)\bigr)}{\lambda \sigma_j^2}.
\end{equation}

The optimal trade size $x^*$ is increasing in the expected price gap
$(\mu_j - p_i(t))$ and decreasing in both the risk aversion $\lambda$
and the price volatility $\sigma_j^2$. In the limiting case
$\lambda \to 0$, $x^* \to \infty$, reflecting that
a risk-neutral agent would trade without bound as long as
$\mu_j > p_i(t)$. For AI agents, $\lambda$ can be interpreted as a
configurable parameter that governs how aggressively the agent trades.

\subsubsection{Arbitrage Threshold Condition.}

Arbitrage is profitable only if $\pi_{ij}(x,t) > 0$.
Rearranging~(\ref{eq:profit}) yields a necessary condition:
\begin{equation}
\label{eq:threshold}
  \mathbb{E}[p_j(t+\tau_{ij})] > p_i(t) \left(1 + \frac{c(t)}{x}\right).
\end{equation}

This condition reveals a fundamental frictions--arbitrage tradeoff.
Higher transaction costs $c(t)$ or smaller trade sizes $x$ require a
larger price gap to justify arbitrage. For AI agents operating with
limited wallet balances, this threshold may exclude many opportunities
that are viable for well-capitalized searchers. Conversely, agents
that can aggregate capital across chains or batch transactions can
lower the effective $c(t)/x$ ratio.

% \begin{remark}
% The normality assumption yields a tractable closed-form solution. In
% practice, cross-chain price gaps are right-skewed, better approximated
% by a log-normal distribution. We adopt the log-normal model in our
% empirical evaluation and show that the
% qualitative conclusions of the threshold
% condition~(\ref{eq:threshold}) are robust to this distributional
% choice.
% \end{remark}

\subsection{Adaptive Path Selection}

In a multi-chain environment, the agent must choose which path
$(i, j)$ to use at each time step. This is an online decision problem:
the agent does not know in advance which paths are reliable, and must
learn from experience while continuing to exploit its current
knowledge.

\subsubsection{Belief Update Mechanism.}

The agent maintains a belief $\hat{\rho}_{ij}(t) \in [0, 1]$ for each
path $(i,j)$, representing the estimated probability that arbitrage on
that path will succeed (i.e., yield positive profit). Let
$\eta \in (0,1)$ be the learning rate. After each attempt, beliefs are
updated as follows. If the arbitrage succeeds (positive profit):
\begin{equation}
\label{eq:belief-success}
  \hat{\rho}_{ij}(t+1) = \hat{\rho}_{ij}(t) + \eta\bigl(1 - \hat{\rho}_{ij}(t)\bigr).
\end{equation}
If the arbitrage fails (negative profit or execution failure):
\begin{equation}
\label{eq:belief-fail}
  \hat{\rho}_{ij}(t+1) = \hat{\rho}_{ij}(t) - \eta \hat{\rho}_{ij}(t)
                      = (1 - \eta)\hat{\rho}_{ij}(t).
\end{equation}

The agent selects the path that maximizes belief-weighted expected
profit:
\begin{equation}
\label{eq:path-select}
  (i^*, j^*)_t = \arg\max_{(i,j) \in \mathcal{P}}
    \left\{ \hat{\rho}_{ij}(t) \cdot \pi_{ij}(x^*, t) + \delta \cdot \mathbb{E}[V_{t+1}(s_{t+1})] \right\}.
\end{equation}

In practice, the continuation value $\mathbb{E}[V_{t+1}]$ is difficult
to compute exactly. A myopic approximation drops the continuation
term:
\begin{equation}
\label{eq:path-select-myopic}
  (i^*, j^*)_t = \arg\max_{(i,j) \in \mathcal{P}}
    \left\{ \hat{\rho}_{ij}(t) \cdot \pi_{ij}(x^*, t) \right\}.
\end{equation}

Algorithm~1 summarizes the complete adaptive path selection procedure.

% \begin{figure}[h]
% \centering
% \begin{minipage}{0.95\linewidth}
% \hrule\vspace{4pt}
% \textbf{Algorithm 1:} Adaptive Cross-Chain Path Selection \vspace{2pt}
% \hrule\vspace{4pt}
% \textbf{Input:} Chains $\mathcal{B}$, paths $\mathcal{P}$, learning rate $\eta$, risk aversion $\lambda$, exploration rate $\epsilon$, time horizon $T$ \\
% \textbf{Output:} Cumulative profit $\Pi$
% \vspace{2pt}

% \textbf{Initialize:} $\hat{\rho}_{ij} \leftarrow 0.5$ for all $(i,j) \in \mathcal{P}$; $\Pi \leftarrow 0$

% \textbf{for} $t = 1, 2, \ldots, T$ \textbf{do}

% \hspace{1em} Observe prices $\{p_i(t)\}_{i=1}^N$ and costs $\{c_{ij}(t)\}$

% \hspace{1em} Compute $x^*_{ij}$ and $\pi_{ij}(x^*_{ij}, t)$ for all $(i,j) \in \mathcal{P}$ via~(\ref{eq:optimal-x}) and~(\ref{eq:profit})

% \hspace{1em} \textbf{with probability} $\epsilon$: select $(i^*, j^*)$ uniformly at random from $\mathcal{P}$

% \hspace{1em} \textbf{otherwise:} $(i^*, j^*) \leftarrow \arg\max_{(i,j)} \hat{\rho}_{ij}(t) \cdot \pi_{ij}(x^*_{ij}, t)$

% \hspace{1em} Execute arbitrage on path $(i^*, j^*)$ with trade size $x^*_{i^*j^*}$

% \hspace{1em} Observe realized profit $\pi^{\text{real}}$

% \hspace{1em} \textbf{if} $\pi^{\text{real}} > 0$ \textbf{then}

% \hspace{2em} $\hat{\rho}_{i^*j^*} \leftarrow \hat{\rho}_{i^*j^*} + \eta(1 - \hat{\rho}_{i^*j^*})$

% \hspace{1em} \textbf{else}

% \hspace{2em} $\hat{\rho}_{i^*j^*} \leftarrow (1 - \eta) \cdot \hat{\rho}_{i^*j^*}$

% \hspace{1em} $\Pi \leftarrow \Pi + \pi^{\text{real}}$

% \textbf{return} $\Pi$
% \vspace{4pt}\hrule
% \end{minipage}
% \label{alg:adaptive}
% \end{figure}

\begin{algorithm}[t]
\SetAlgoLined
\KwIn{Chains $\mathcal{B}$, paths $\mathcal{P}$, learning rate $\eta$, risk aversion $\lambda$, exploration rate $\epsilon$, horizon $T$}
\KwOut{Cumulative profit $\Pi$}
Initialize $\hat{\rho}_{ij} \leftarrow 0.5$ for all $(i,j) \in \mathcal{P}$; $\Pi \leftarrow 0$\;
\For{$t = 1, \ldots, T$}{
  Observe $\{p_i(t)\}, \{c_{ij}(t)\}$; compute $x^*_{ij}$, $\pi_{ij}$ via~(\ref{eq:optimal-x}),~(\ref{eq:profit})\;
  \eIf{$\mathrm{rand}() < \epsilon$}{
    Select $(i^*,j^*)$ uniformly at random from $\mathcal{P}$\;
  }{
    $(i^*,j^*) \leftarrow \arg\max_{(i,j)} \hat{\rho}_{ij}(t) \cdot \pi_{ij}(x^*_{ij},t)$\;
  }
  Execute on $(i^*,j^*)$ with size $x^*_{i^*j^*}$; observe $\pi^{\text{real}}$\;
  \lIf{$\pi^{\text{real}} > 0$}{$\hat{\rho}_{i^*j^*} \leftarrow \hat{\rho}_{i^*j^*} + \eta(1 - \hat{\rho}_{i^*j^*})$}
  \lElse{$\hat{\rho}_{i^*j^*} \leftarrow (1 - \eta) \hat{\rho}_{i^*j^*}$}
  $\Pi \leftarrow \Pi + \pi^{\text{real}}$\;
}
\Return{$\Pi$}\;
\caption{Adaptive Cross-Chain Path Selection}
\end{algorithm}

The algorithm uses an $\epsilon$-greedy exploration strategy: with
probability $\epsilon$ the agent selects a random path to gather
information about underexplored routes, and otherwise exploits its
current beliefs. This balances exploration and exploitation, connecting
to the classical multi-armed bandit framework~\cite{auer2002finite}.

\begin{proposition}
Under the belief update rules~(\ref{eq:belief-success})--(\ref{eq:belief-fail}),
if the true success probability $\rho_{ij}$ of path $(i,j)$ is
stationary, and the path is selected infinitely often, then
$\hat{\rho}_{ij}(t) \to \rho_{ij}$ as $t \to \infty$.
\end{proposition}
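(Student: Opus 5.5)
The plan is to treat the update rules~(\ref{eq:belief-success})--(\ref{eq:belief-fail}) as a Robbins--Monro stochastic approximation, indexed by the visits to path $(i,j)$ rather than by calendar time.

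\textbf{Reduction to a scalar recursion.} Let $t_1<t_2<\cdots$ be the (infinitely many, by hypothesis) times at which $(i,j)$ is selected. Let $Y_n\in\{0,1\}$ indicate success at the $n$-th selection, and write $r_n=\hat\rho_{ij}(t_n)$. Between selections the belief is not modified, so the two update rules collapse into the single recursion
\[
r_{n+1}=r_n+\eta_n\,(Y_n-r_n).
\]
Here $\eta_n$ is the step used at the $n$-th visit. I will make the schedule explicit: $\eta_n\in(0,1)$ with $\sum_n\eta_n=\infty$ and $\sum_n\eta_n^2<\infty$ (e.g.\ $\eta_n=1/(n+1)$). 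This is the Robbins--Monro schedule mentioned in the abstract. For a fixed constant $\eta$ the statement is false almost surely: $r_n$ is then an exponentially weighted average that keeps fluctuating. I will note this in a remark, together with what survives in that case, namely $\mathbb{E}[r_n]\to\rho_{ij}$ and $\limsup\mathrm{Var}(r_n)\le \eta\rho_{ij}(1-\rho_{ij})/(2-\eta)$.

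\textbf{Martingale decomposition.} Let $\mathcal{F}_n$ be the history up to just before the $n$-th selection, and put $e_n=r_n-\rho_{ij}$ and $\xi_n=Y_n-\rho_{ij}$. Stationarity of the success probability gives $\mathbb{E}[\xi_n\mid\mathcal{F}_n]=0$ and $|\xi_n|\le1$. The recursion then becomes
\[
e_{n+1}=(1-\eta_n)e_n+\eta_n\xi_n .
\]
Since $\sum\eta_n^2\mathbb{E}[\xi_n^2]<\infty$, the martingale $M_n=\sum_{k<n}\eta_k\xi_k$ is bounded in $L^2$ and converges almost surely.

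\textbf{Deterministic step.} On the event where $M_n$ converges, I will show $e_n\to0$ using only $\sum\eta_n=\infty$ and $\eta_n\to0$. One route is to unroll the recursion as $e_n=\prod_{k<n}(1-\eta_k)\,e_0+\sum_{k<n}\eta_k\xi_k\prod_{k<m<n}(1-\eta_m)$. The first term vanishes because $\sum\eta_k=\infty$. For the second, apply summation by parts (a Kronecker-type lemma): tails of a convergent series, weighted by these products, tend to zero. Alternatively, I can use the Lyapunov function $e_n^2$ and apply the Robbins--Siegmund almost-supermartingale theorem. This gives $\mathbb{E}[e_{n+1}^2\mid\mathcal{F}_n]\le(1-2\eta_n+\eta_n^2)e_n^2+\eta_n^2$, hence $e_n^2$ converges and $\sum\eta_n e_n^2<\infty$, which forces the limit to be $0$. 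Finally, since $\hat\rho_{ij}(t)$ is constant between visits, $r_n\to\rho_{ij}$ implies $\hat\rho_{ij}(t)\to\rho_{ij}$.

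\textbf{Main obstacle.} The delicate point is justifying $\mathbb{E}[Y_n\mid\mathcal{F}_n]=\rho_{ij}$. The selection times are random, depend on past outcomes through the $\epsilon$-greedy rule, and are not independent of the process. I will formalize ``stationary'' as follows: conditional on the past and on the event that $(i,j)$ is chosen, success occurs with probability $\rho_{ij}$. The $t_n$ are stopping times, so this yields the martingale-difference property via optional sampling. The ``selected infinitely often'' hypothesis is exactly what guarantees the subsequence is infinite, so that $\sum\eta_n=\infty$ actually applies along it.
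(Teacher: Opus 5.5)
Your proposal is correct and follows the same route as the paper. Both write the update as the exponential moving average $r_{n+1}=(1-\eta_n)r_n+\eta_n Y_n$, note that a constant $\eta$ only gives a steady-state mean $\rho_{ij}$ with variance $\eta\rho_{ij}(1-\rho_{ij})/(2-\eta)$, and get almost-sure convergence from a Robbins--Monro schedule. The difference is that you prove the Robbins--Monro step (via $L^2$-martingale convergence plus a Kronecker or Robbins--Siegmund argument) where the paper only cites it.

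Two of your choices also fix things the paper leaves loose. Indexing the step sizes by the number of visits to $(i,j)$, and justifying the martingale-difference property under adaptive selection, are what make the argument work. With a calendar-time schedule $\eta_t$, as the paper writes it, the steps applied along a path that is only selected infinitely often can be summable (e.g.\ $\eta_t=1/t$ with visits at $t_n=2^n$), and then convergence fails.
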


\begin{proof}
Let $Y_t \in \{0, 1\}$ be the outcome indicator at time $t$ (1 for
success, 0 for failure), with $\mathbb{E}[Y_t] = \rho_{ij}$. The
update rule can be written as:
\begin{equation}
  \hat{\rho}_{ij}(t+1) = (1 - \eta)\hat{\rho}_{ij}(t) + \eta Y_t.
\end{equation}
This is an exponential moving average with smoothing parameter $\eta$.
Taking expectations in steady state,
$\mathbb{E}[\hat{\rho}_{ij}] = (1-\eta)\mathbb{E}[\hat{\rho}_{ij}] + \eta \rho_{ij}$,
which gives $\mathbb{E}[\hat{\rho}_{ij}] = \rho_{ij}$. The variance of
the estimator is
$\mathrm{Var}(\hat{\rho}_{ij}) = \frac{\eta}{2-\eta} \rho_{ij}(1-\rho_{ij})$,
which decreases as $\eta \to 0$. Thus, with a decaying learning rate
schedule $\eta_t \to 0$ satisfying $\sum \eta_t = \infty$ and
$\sum \eta_t^2 < \infty$, the estimator converges almost surely to
$\rho_{ij}$ by the Robbins--Monro theorem. \qed
\end{proof}

\section{Agent Vulnerability to MEV Attacks}
\label{sec:vulnerability}
We now examine the agent as a target rather than a searcher of MEV.

\subsection{Agent Predictability and Exploitation}

Traditional MEV targets are human traders whose behavior contains
noise from emotional decisions, variable timing, and inconsistent
trade sizes. AI agents, by contrast, follow deterministic or
near-deterministic policies. An agent that checks prices every
$\Delta t$ seconds, trades whenever the price gap exceeds a threshold
$\theta$, and uses a fixed trade size $x^*$ from~(\ref{eq:optimal-x})
generates a transaction pattern that is observable and learnable by an
adversary.

Let $\mathcal{H}_t = \{(t_k, x_k, i_k, j_k)\}_{k=1}^{t}$ denote the
history of the agent's transactions up to time $t$, where each entry
records the timestamp, trade size, source chain, and destination
chain. An MEV attacker observing $\mathcal{H}_t$ can attempt to
predict the agent's next transaction
$(t_{t+1}, x_{t+1}, i_{t+1}, j_{t+1})$. Define the attacker's
prediction accuracy as:
\begin{equation}
\label{eq:pred-accuracy}
  \mathcal{A}(t) = \Pr\left[\hat{t}_{t+1} \approx t_{t+1}, \; \hat{i}_{t+1} = i_{t+1}\right],
\end{equation}
where $\hat{t}_{t+1}$ and $\hat{i}_{t+1}$ are the attacker's
predictions. For a fully deterministic agent, $\mathcal{A}(t) \to 1$
as $t$ grows, since the attacker can learn the agent's policy exactly
from observed history.

\subsubsection{Sandwich Attack on Cross-Chain Agents.}

Suppose the agent submits a swap of size $x_\alpha$ on a DEX,
exchanging token $B$ for token $A$ at initial price $p_0$. An attacker
who predicts this transaction sandwiches it: a front-run buy of $x_a$
units of token $A$ pushes the price to
$p_1 = p_0 + \Delta p_{\text{front}}$, the agent's swap then executes
at the inflated $p_1$, and a back-run sell of $x_a$ units at
$p_2 = p_1 + \Delta p_{\text{agent}}$ closes the position.

The attacker's profit is:
\begin{equation}
\label{eq:sandwich-profit}
  \pi_{\text{sand}} = x_a (p_2 - p_0) - c_{\text{gas}} - b,
\end{equation}
where $b$ is the bribe or priority fee paid to secure the desired
transaction ordering.

The agent's loss from being sandwiched is the difference between the
price it would have received without the attack and the actual
execution price:
\begin{equation}
\label{eq:agent-loss}
  L_\alpha = x_\alpha \cdot \Delta p_{\text{front}} / p_0.
\end{equation}

For agents executing cross-chain arbitrage, this loss directly reduces
the arbitrage profit $\pi_{ij}$ from~(\ref{eq:profit}), potentially
turning a profitable opportunity into a net loss.

\subsubsection{Cross-Chain Predictive Front-Running.}

In a cross-chain setting, an additional attack vector arises. When an
agent initiates a bridge transfer on chain $B_i$, this transaction is
visible on-chain before the assets arrive on chain $B_j$. An attacker
monitoring chain $B_i$ can detect the bridge transaction, infer the
agent's intended trade on chain $B_j$, and pre-position on chain $B_j$
before the bridged assets arrive.

Let $\tau_{ij}$ be the bridge delay. The attacker's information
advantage window is exactly $\tau_{ij}$: the attacker knows the
agent's intent on chain $B_j$ a full $\tau_{ij}$ seconds before the
agent can act. The attacker's expected profit from predictive
front-running is:
\begin{equation}
\label{eq:predictive-profit}
  \pi_{\text{pred}} = x_a \cdot \mathbb{E}[\Delta p \mid \text{bridge detected}] - c_{\text{pos}},
\end{equation}
where $c_{\text{pos}}$ is the cost of pre-positioning on chain $B_j$.

This attack is more severe for bridge-based arbitrage
(\cite{oz2025cross} report averages of 242 seconds of latency) than
for inventory-based arbitrage (averaging 9 seconds), as longer delays
give the attacker more time to act.

\subsection{Mitigation Strategies}

We propose three strategies that an AI agent can employ to reduce its
MEV exposure.

\subsubsection{Transaction Timing Randomization.}

Instead of executing at fixed intervals $\Delta t$, the agent adds
random noise to its execution timing:
\begin{equation}
\label{eq:random-timing}
  t_{\text{exec}} = t_{\text{planned}} + \xi, \quad \xi \sim \mathrm{Uniform}(-\delta_t, \delta_t),
\end{equation}
where $\delta_t$ controls the noise magnitude. This reduces the
attacker's ability to predict when the agent will transact. However,
large $\delta_t$ may cause the agent to miss time-sensitive arbitrage
windows, creating a tradeoff between protection and profitability.

\subsubsection{Trade Size Perturbation.}

The agent perturbs its trade size around the optimal $x^*$:
\begin{equation}
\label{eq:random-size}
  x_{\text{exec}} = x^* + \zeta, \quad \zeta \sim \mathcal{N}(0, \sigma_x^2),
\end{equation}
subject to $x_{\text{exec}} > 0$. Since the attacker's sandwich profit
in~(\ref{eq:sandwich-profit}) depends on correctly anticipating the
agent's trade size to calibrate $x_a$, perturbation reduces the
attacker's expected gain. The agent's utility loss from deviating from
$x^*$ is bounded by:
\begin{equation}
\label{eq:util-loss}
  U(x^*) - \mathbb{E}[U(x^* + \zeta)] = \frac{\lambda \sigma_j^2}{2 p_i(t)^2} \cdot \sigma_x^2,
\end{equation}
which follows from the quadratic form of the utility
function~(\ref{eq:utility}). The agent can choose $\sigma_x^2$ to
balance protection against utility loss.

\subsubsection{MEV-Aware Bridge Selection.}

The agent can incorporate MEV risk into its path selection by
modifying the decision rule in~(\ref{eq:path-select-myopic}). Define
$m_{ij}(t)$ as the estimated MEV exposure on path $(i,j)$, based on
historical sandwich frequency or the bridge's information leakage
properties. The MEV-aware path selection rule becomes:
\begin{equation}
\label{eq:mev-aware}
  (i^*, j^*)_t = \arg\max_{(i,j) \in \mathcal{P}}
    \left\{ \hat{\rho}_{ij}(t) \cdot \pi_{ij}(x^*, t) - m_{ij}(t) \right\}.
\end{equation}

This rule penalizes paths with high MEV exposure. An agent using this
rule may choose a path with a slightly lower expected profit but
significantly lower attack risk. In practice, $m_{ij}(t)$ can be
estimated from on-chain data by counting sandwich events on each DEX
or measuring the frequency of front-running on specific bridge
protocols.

\subsection{Protection-Profitability Tradeoff}

The three mitigation strategies above share a common structure: each
introduces noise or constraints that reduce the agent's predictability
at the cost of some expected profit. We can formalize this tradeoff.
Let $\phi \in [0,1]$ represent the agent's overall protection level,
where $\phi = 0$ is a fully deterministic agent and $\phi = 1$ is a
maximally randomized agent. The agent's effective utility is:
\begin{equation}
\label{eq:tradeoff}
  U_{\text{eff}}(\phi) = (1 - \phi) \cdot U(x^*) + \phi \cdot \bigl(U(x^*) - L_{\text{noise}}(\phi)\bigr) + \phi \cdot S(\phi),
\end{equation}
where $L_{\text{noise}}(\phi)$ is the utility loss from randomization
and $S(\phi)$ is the savings from avoided MEV attacks, which increases
with $\phi$. The optimal protection level $\phi^*$ balances these two
forces.

In competitive MEV environments where multiple attackers monitor agent
behavior, the marginal benefit of protection $S'(\phi)$ is high,
favoring aggressive randomization. In low-competition environments,
the agent may prefer minimal protection to maximize raw arbitrage
profit.

% =====================================================================
\section{Empirical Evaluation}
\label{sec:experiments}

We evaluate our models using two complementary data sources: (i) six
months of aggregate market data from DeFiLlama for parameter
calibration, and (ii) 23{,}000 tick-level Uniswap~V3 swap events
across Ethereum, Arbitrum, and Base spanning five trading days for
direct observation of cross-chain price gaps. We first describe the
data and calibration, then characterize real cross-chain price gaps,
and finally present three simulation experiments: arbitrage threshold
analysis, adaptive path selection, and agent predictability.

\subsection{Data and Calibration}

We collect four datasets from DeFiLlama covering the period September
2025 to March 2026 (six months): Daily OHLCV prices for WETH/USDC, WBTC/USDC, ARB/USDC, and OP/USDC; Daily gas fees for Ethereum, Arbitrum, Base, and Optimism;  Daily bridge volume for Across, Stargate, and Hop Protocol; Daily DEX volume for 30 chains.

In addition, we collect 23{,}000 individual Uniswap~V3 swap events
for the WETH/USDC pair across three chains: Ethereum (3{,}000 swaps
across the 0.01\%, 0.05\%, and 0.30\% fee-tier pools on March 23,
2026), Arbitrum (10{,}000 swaps across 0.01\% and 0.05\% pools), and
Base (10{,}000 swaps across 0.01\% and 0.05\% pools). The L2 data
spans five days across two periods (February 22--25 and March 23,
2026), with timestamps at second-level resolution. This dataset
enables direct observation of cross-chain price gaps at the
granularity relevant to arbitrage execution and allows us to verify
consistency across multiple trading days.

A key empirical challenge is that cross-chain price discrepancies at
the daily level are negligible, as arbitrage bots close gaps within
milliseconds. Following~\cite{oz2025cross}, we adopt a calibrated
simulation approach: we extract distributional parameters from real
data and use them to generate realistic arbitrage scenarios.

We estimate daily price volatility using the Parkinson
estimator~\cite{parkinson1980extreme}, which uses intraday high--low
range:
\begin{equation}
  \hat{\sigma}^2 = \frac{1}{4\ln 2} \cdot \mathbb{E}\left[\left(\ln \frac{H}{L}\right)^2\right],
\end{equation}
where $H$ and $L$ are the daily high and low prices.
Table~\ref{tab:calibration} reports the calibrated parameters.

\begin{table}[t]
\centering
\caption{Calibrated parameters from DeFiLlama data.}
\label{tab:calibration}
\begin{tabular}{@{}lrr@{\hspace{2em}}lrr@{}}
\toprule
\multicolumn{3}{c}{\textbf{Panel A: Token Volatility}} & \multicolumn{3}{c}{\textbf{Panel B: Chain Fees (USD/day)}} \\
Token & $\bar{p}$ & $\hat{\sigma}$ & Chain & Mean & Median \\
\midrule
WETH/USDC & 3{,}019  & 0.0374 & Ethereum & 668{,}354 & 438{,}810 \\
WBTC/USDC & 90{,}039 & 0.0263 & Base     & 201{,}367 & 136{,}604 \\
ARB/USDC  & 0.22     & 0.0542 & Arbitrum & 45{,}049  & 23{,}123  \\
OP/USDC   & 0.32     & 0.0581 & Optimism & 4{,}844   & 2{,}334   \\
\midrule
\multicolumn{3}{c}{\textbf{Panel C: Bridge Volume (USD)}} & \multicolumn{3}{c}{\textbf{Panel D: DEX Volume (USD)}} \\
\multicolumn{3}{c}{Across: 47{,}491{,}067} & \multicolumn{3}{c}{Ethereum: 2{,}235{,}832{,}499} \\
\multicolumn{3}{c}{Stargate: 36{,}518{,}438} & \multicolumn{3}{c}{Base: 1{,}121{,}527{,}914} \\
\multicolumn{3}{c}{Hop: 4{,}023} & \multicolumn{3}{c}{Arbitrum: 581{,}000{,}984} \\
 & & & \multicolumn{3}{c}{Optimism: 34{,}905{,}833} \\
\bottomrule
\end{tabular}
\end{table}

Two observations from Table~\ref{tab:calibration} are worth noting.
First, L2 native tokens (ARB, OP) exhibit nearly twice the daily
volatility of major assets (WETH, WBTC), implying larger intraday
price gaps and more frequent arbitrage windows. Second, gas fees
differ by over two orders of magnitude across chains: Ethereum
averages \$668K/day in total fees while Optimism averages only \$4.8K,
directly affecting the threshold condition in~(\ref{eq:threshold}).

For the simulation experiments, we model instantaneous cross-chain
price gaps using a log-normal distribution:
\begin{equation}
  \Delta p \sim \mathrm{LogNormal}(\ln(0.03), 1.2),
\end{equation}
with a median gap of 0.03\% of price, calibrated from the Uniswap~V3
swap data (Section~\ref{sec:empirical-gaps}). This right-skewed
specification captures the empirical pattern where the median
Ethereum--Arbitrum gap is 0.031\% but the 90th percentile reaches
0.086\% and extreme values exceed 0.4\%.

\subsection{Empirical Cross-Chain Price Gaps}
\label{sec:empirical-gaps}

Before turning to simulation experiments, we use the Uniswap~V3 swap
data to directly characterize cross-chain price gaps. We aggregate
swap-level execution prices into 10-second bins per chain, compute the
mean execution price per bin, and measure the absolute percentage
difference between chains for bins where both chains have at least one
swap. This yields 112 simultaneous observations for the
Ethereum--Arbitrum pair (limited to the 28-minute overlap window on
March 23) and 954 for Arbitrum--Base (spanning five days of trading).

Table~\ref{tab:crosschain-gaps} reports the cross-chain gap
statistics. The Ethereum--Arbitrum pair exhibits a mean gap of 0.044\%
(median 0.031\%), with the 90th percentile at 0.086\% and extreme
values reaching 0.479\%. These gaps are economically significant: at a
\$50{,}000 trade size with realistic costs (\$5.25 including Ethereum
gas and bridge fees), 80\% of 10-second windows present a profitable
arbitrage opportunity. For the Arbitrum--Base pair, the five-day
dataset reveals a mean gap of 0.013\% (median 0.010\%), consistent
across all five observed days (daily means range from 0.010\% to
0.015\%). Despite these smaller gaps, the dramatically lower L2
execution costs (\$0.70--\$1.20 per trade) mean that at \$10{,}000
trade size via CCTP, 63\% of windows are viable.

\begin{table}[h]
\centering
\caption{Cross-chain price gap statistics (\%, WETH/USDC).}
\label{tab:crosschain-gaps}
\setlength{\tabcolsep}{4pt}
\begin{tabular}{lcrrrr}
\toprule
\textbf{Pair} & $N$ & Med. & Mean & P90 & Max \\
\midrule
\multicolumn{6}{l}{\textit{Cross-chain}} \\
ETH--ARB  & 112 & .031 & .044 & .086 & .479 \\
ARB--Base & 954 & .010 & .013 & .027 & .077 \\
\midrule
\multicolumn{6}{l}{\textit{Intra-chain (different fee tier)}} \\
ETH .01\%--.05\% & 267 & .040 & .049 & .088 & .564 \\
ARB .01\%--.05\% & 400 & .030 & .027 & .041 & .089 \\
\bottomrule
\end{tabular}
\end{table}

Figure~\ref{fig:crosschain-gaps} visualizes the results. Panel~(a)
shows execution prices across all observation periods, covering a
\$1{,}850--\$2{,}150 price range across five trading days. Panel~(b)
compares the gap distributions via box plots: the ETH--ARB pair
(L1--L2, $N=112$) exhibits wider gaps with heavier tails than
ARB--Base (L2--L2, $N=954$). Panel~(c) shows the cumulative
distribution function, highlighting that the median ETH--ARB gap
(0.031\%) is roughly three times the ARB--Base median (0.010\%).
Panel~(d) quantifies arbitrage viability as a function of trade size
under realistic cost assumptions: for ETH--ARB at \$5.25 total cost,
trades below \$5K are rarely viable, whereas ARB--Base via CCTP at
\$0.70 reaches 63\% viability at \$10K.

\begin{figure}[t]
  \centering
  \includegraphics[width=0.78\linewidth]{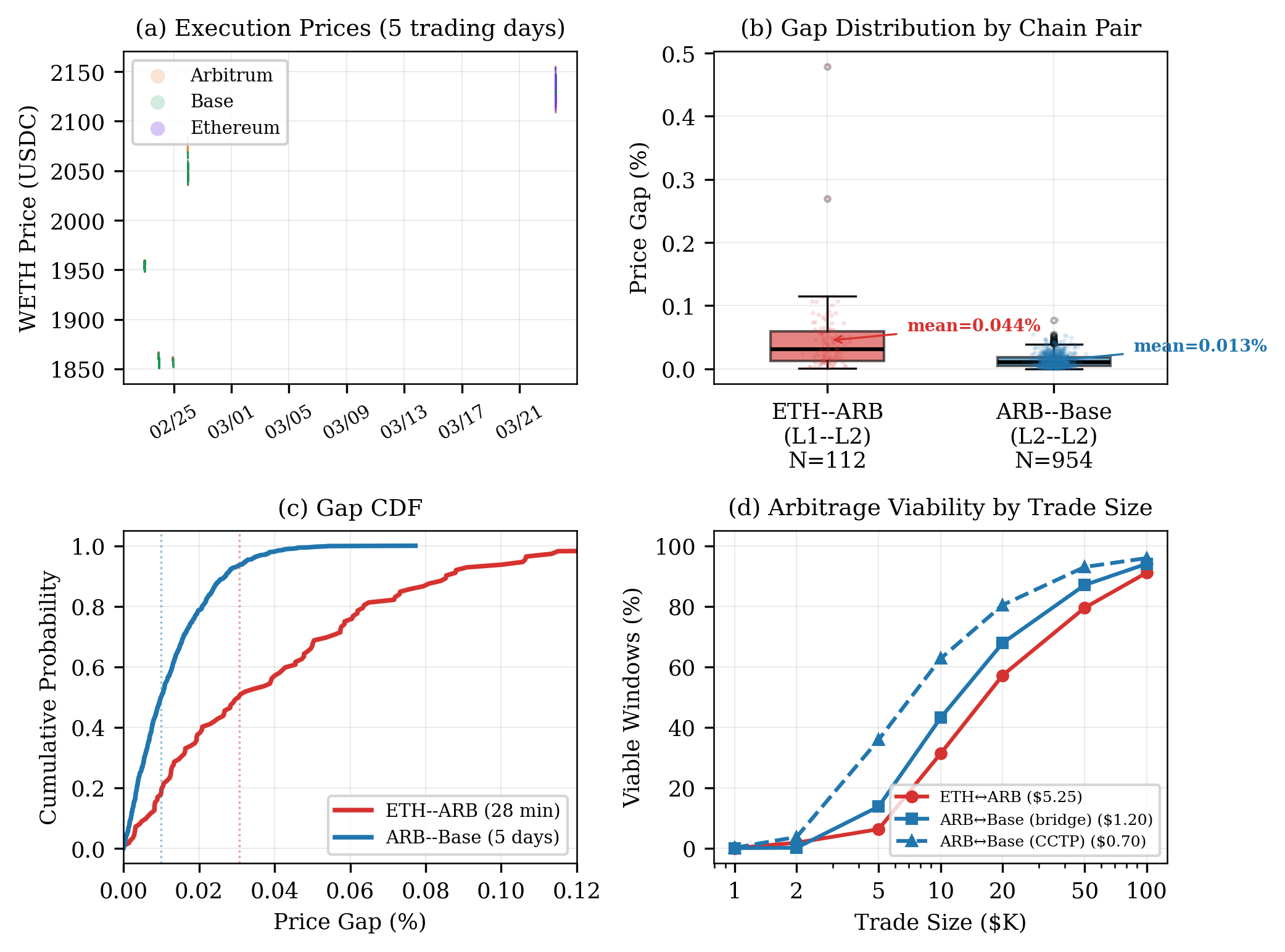}
  \caption{Cross-chain price gaps from Uniswap~V3 swap events.
    (a)~Execution prices across chains.
    (b)~Gap distribution by chain pair.
    (c)~Cumulative distribution of gaps.
    (d)~Fraction of viable arbitrage windows by trade size and cost
    scenario.}
  \label{fig:crosschain-gaps}
\end{figure}

Two findings from this analysis are noteworthy. First, the
Ethereum--Arbitrum cross-chain gap (mean 0.044\%) is comparable in
magnitude to the intra-chain gap between different fee tiers on
Ethereum (mean 0.049\%), suggesting that cross-chain price dispersion
is driven more by execution timing and information latency than by
fundamental market segmentation. Second, the L2--L2 gap
(Arbitrum--Base, mean 0.013\%) is roughly one-third of the L1--L2 gap,
consistent with the shorter bridge delays between L2 chains reported
by~\cite{oz2025cross}. Importantly, the ARB--Base gap is stable across
all five observed trading days (daily means: 0.010\%, 0.013\%,
0.013\%, 0.014\%, 0.015\%), spanning a period in which ETH prices
ranged from \$1{,}850 to \$2{,}150. This consistency across market
conditions strengthens the case that cross-chain price gaps are a
structural feature of multi-chain DEX liquidity rather than a
transient artifact. These observations validate the model's prediction
in~(\ref{eq:threshold}) that cross-chain arbitrage viability is
determined by the gap-to-cost ratio rather than the absolute gap size.

\subsection{Experiment 1: Arbitrage Threshold Analysis}

This experiment evaluates the threshold condition
from~(\ref{eq:threshold}) under realistic cost parameters. We generate
50{,}000 price gap samples from the calibrated log-normal distribution
and compute the fraction of gaps that exceed the cost threshold
$c(t)/x$ for each chain and trade size.

Figure~\ref{fig:threshold} presents the results. Panel~(a) shows that
on Ethereum, trade sizes below \$25{,}000 yield almost no viable
opportunities due to high gas costs, while on Optimism and Base, even
\$1{,}000 trades produce viable arbitrage in 30--40\% of simulated
gaps. Panel~(b) compares the log-normal gap distribution (median
0.03\%, calibrated from swap data) against a naive normal assumption,
showing that the log-normal model produces a heavier right tail.
Panel~(c) reports a sensitivity analysis: we vary the assumed median
gap from 0.01\% to 0.5\%, marking the empirically observed median
(0.03\%), and find that the qualitative ranking across chains is
stable. L2 chains remain viable at 10--100$\times$ smaller trade sizes
than Ethereum regardless of the gap assumption.

\begin{figure}[t]
  \centering
  \includegraphics[width=0.78\linewidth]{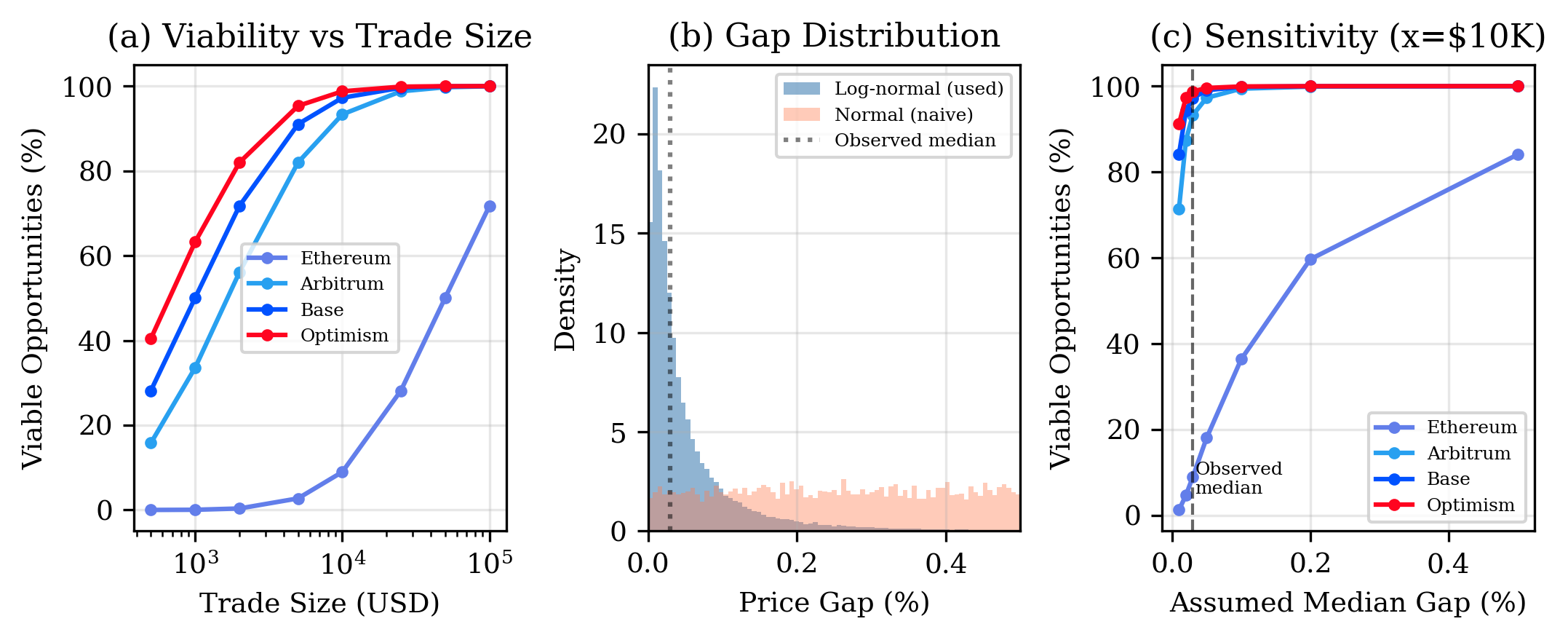}
  \caption{Arbitrage threshold analysis.
    (a)~Fraction of viable opportunities by trade size and chain.
    (b)~Log-normal vs.\ normal gap distribution.
    (c)~Sensitivity to median gap assumption at \$10K trade size.}
  \label{fig:threshold}
\end{figure}

This result has a direct implication for agent design: agents with
limited capital should prioritize L2 chains where the
cost-to-trade-size ratio is favorable, consistent with the path
selection framework in Section~\ref{sec:model-sia}.

\subsection{Experiment 2: Adaptive Path Selection}

We simulate 180 days of cross-chain arbitrage across four chains (12
directed paths) and compare four strategies: \textbf{Random}: selects a path uniformly at random each day; \textbf{Greedy}: selects the path with the highest estimated net profit based on current price gaps and costs; \textbf{$\epsilon$-Greedy}: tracks empirical mean reward per path, exploits the best with probability $1-\epsilon$ and explores randomly with probability $\epsilon = 0.1$; \textbf{Adaptive (ours)}: the belief-weighted algorithm from
    Algorithm~1 with $\eta = 0.1$ and $\epsilon = 0.1$.

Daily returns are computed using real volatility data from the
WETH/USDC OHLCV series and calibrated per-chain gas costs. Each path
has a latent success probability drawn from a structured distribution
where lower-cost paths tend to have higher success rates, reflecting
the empirical finding that L2 paths are more
reliable~\cite{oz2025cross}. To ensure robustness, we run 100
independent trials, each with a different draw of latent success
probabilities.

\begin{table}[t]
\centering
\caption{Strategy performance (100 runs, 180 days).}
\label{tab:strategy}
\begin{tabular}{lrrr}
\toprule
Strategy & Mean Profit & Std & 95\% CI \\
\midrule
Random            & $-$135 & 345 & [$-$764, 566] \\
Greedy            & 2{,}367 & 697 & [944, 3{,}604] \\
$\epsilon$-Greedy & 2{,}242 & 586 & [1{,}094, 3{,}458] \\
Adaptive (ours)   & \textbf{2{,}627} & 560 & [1{,}738, 3{,}706] \\
\bottomrule
\end{tabular}
\end{table}

Table~\ref{tab:strategy} reports the results. The Adaptive strategy
achieves the highest mean profit (\$2{,}627), outperforming Greedy in
57 of 100 runs and $\epsilon$-Greedy in 69 of 100 runs. Notably, the
Random strategy loses money on average ($-$\$135), confirming that
naive path selection is insufficient when cross-chain gaps are small
relative to costs. The Adaptive strategy's advantage is most
pronounced when latent path success probabilities are heterogeneous,
as the belief-update mechanism allows the agent to identify and
concentrate on high-quality paths. When success probabilities are
similar across paths, all informed strategies perform comparably,
explaining why Adaptive does not dominate in every run.

Figure~\ref{fig:pathselection}(a) shows cumulative profit curves with
interquartile ranges. The Adaptive strategy not only achieves higher
terminal profit but also exhibits lower variance, reflecting the
stabilizing effect of belief-weighted decisions.
Figure~\ref{fig:pathselection}(b) reports the sensitivity of the
Adaptive strategy to the learning rate $\eta$. Performance is stable
across $\eta \in [0.05, 0.3]$, with degradation at extreme values:
$\eta = 0.01$ learns too slowly, while $\eta = 0.5$ overreacts to
individual outcomes.

\begin{figure}[t]
  \centering
  \includegraphics[width=0.78\linewidth]{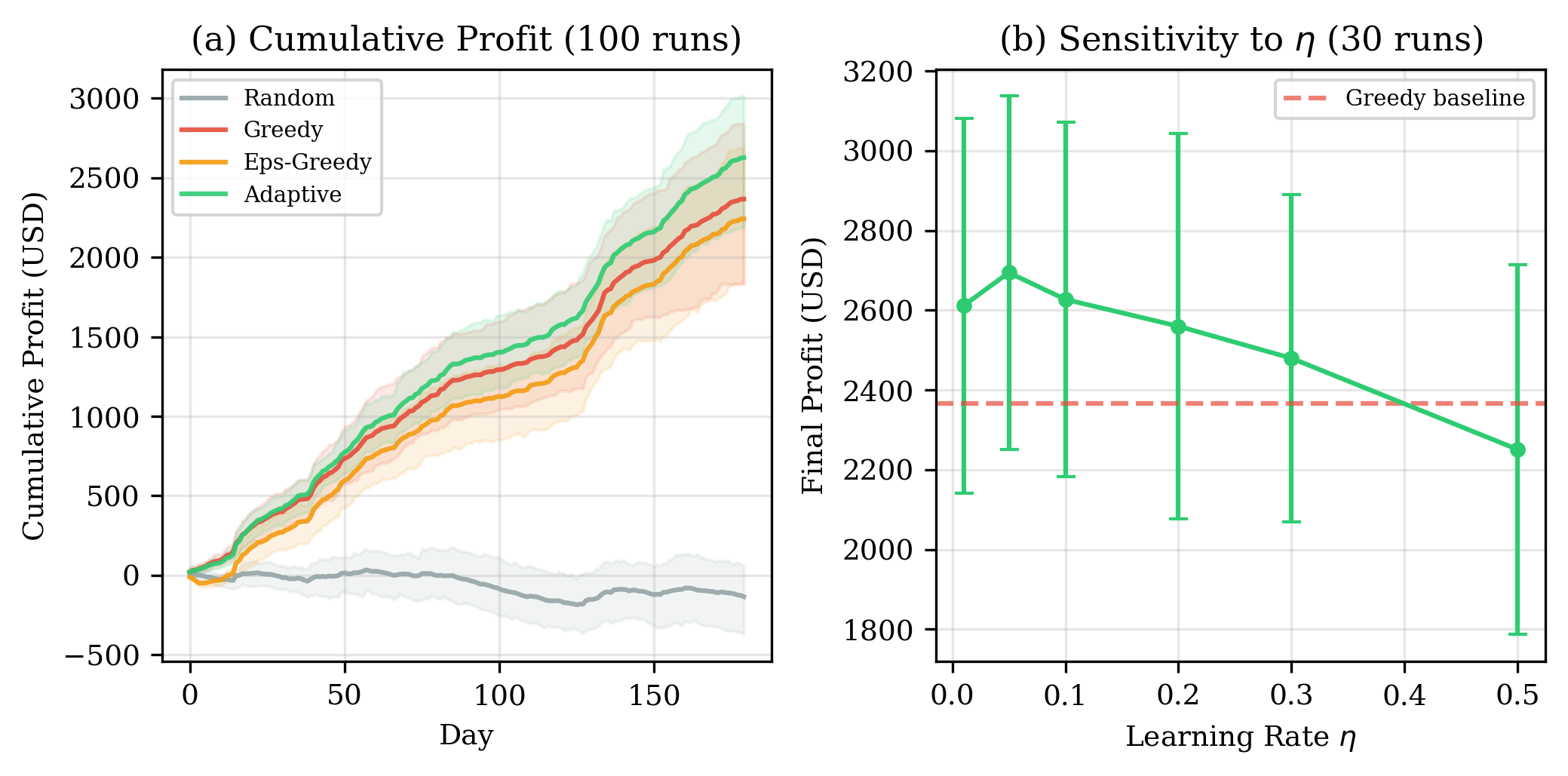}
  \caption{Adaptive path selection.
    (a)~Cumulative profit by strategy (100 runs, shaded region = IQR).
    (b)~Sensitivity to learning rate $\eta$ (error bars = 1~std).}
  \label{fig:pathselection}
\end{figure}

\subsection{Experiment 3: Agent Predictability and Protection}

This experiment evaluates the vulnerability analysis from
Section~\ref{sec:vulnerability}. We simulate an agent executing
periodic trades with a base strategy (trade every $\sim$10 periods,
size $\sim$\$10{,}000), and vary the protection level
$\phi \in \{0, 0.2, 0.4, 0.6, 0.8, 1.0\}$ by adding increasing noise
to timing, trade size, and path selection. An attacker observes the
agent's last 20 trades and predicts the next trade's timing and path
using median-based and majority-vote heuristics. We run 30 independent
trials for each $\phi$ level.

Table~\ref{tab:predictability} reports the attacker's prediction
accuracy. At $\phi = 0$ (fully deterministic agent), the attacker
achieves 100\% joint prediction accuracy. Introducing moderate
randomization ($\phi = 0.4$) reduces joint accuracy to 54.2\%, and at
$\phi = 0.8$ it drops to 16.0\%. Path prediction is harder to disrupt
than timing prediction, as the agent's preferred paths are more stable
than its execution timing.

\begin{table}[t]
\centering
\setlength{\abovecaptionskip}{2pt}
\setlength{\belowcaptionskip}{2pt}
\caption{Attacker prediction accuracy vs.\ protection level.}
\label{tab:predictability}
\setlength{\tabcolsep}{6pt}
\renewcommand{\arraystretch}{0.95}
\begin{tabular}{lrrr}
\toprule
$\phi$ & Joint (\%) & Timing (\%) & Path (\%) \\
\midrule
0.0 & 100.0 & 100.0 & 100.0 \\
0.2 & 88.0  & 96.1  & 91.6  \\
0.4 & 54.2  & 64.0  & 83.5  \\
0.6 & 29.3  & 42.9  & 68.3  \\
0.8 & 16.0  & 31.6  & 52.5  \\
1.0 & 7.6   & 25.6  & 31.4  \\
\bottomrule
\end{tabular}
\end{table}
Figure~\ref{fig:predictability}(a) visualizes the smooth degradation
of attacker accuracy across all three metrics.
Figure~\ref{fig:predictability}(b) presents the
protection-profitability tradeoff predicted by~(\ref{eq:tradeoff}): as
$\phi$ increases, MEV protection improves (red line) but arbitrage
profit decreases (green line). The curves cross near
$\phi \approx 0.4$--$0.6$, suggesting that moderate randomization
offers the best balance between protection and profitability for a
risk-averse agent.

\begin{figure}[t]
  \centering
  \includegraphics[width=0.78\linewidth]{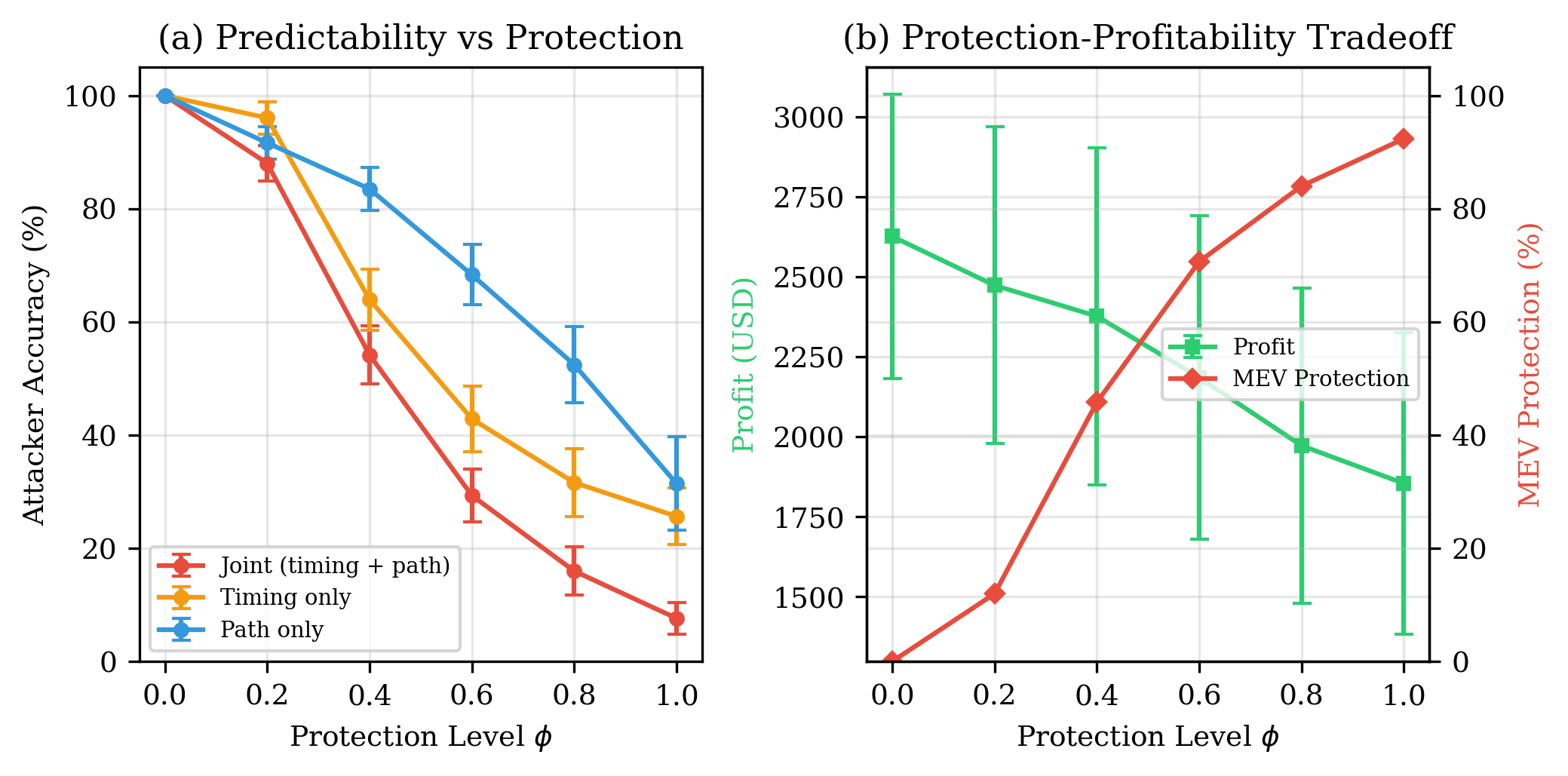}
  \caption{Agent predictability.
    (a)~Attacker accuracy vs.\ protection level (30 runs, error bars =
    1~std).
    (b)~Protection-profitability tradeoff: profit (left axis) and MEV
    protection (right axis) as functions of $\phi$.}
  \label{fig:predictability}
\end{figure}

% =====================================================================
\section{Conclusion}
\label{sec:conclusion}

This paper formalizes the intersection of autonomous AI agents and
cross-chain MEV, treating agents as both arbitrage searchers and MEV
targets. We validate the framework using 23{,}000 tick-level
Uniswap~V3 swap events across Ethereum, Arbitrum, and Base over five
trading days, supplemented by simulations calibrated to DeFiLlama
market data.

Three findings stand out. Transaction costs dominate cross-chain
arbitrage viability: Ethereum--Arbitrum price gaps average 0.044\% at
10-second resolution while Arbitrum--Base gaps average 0.013\%
($\sigma = 0.002\%$), so once gas costs are netted, \$10{,}000 trades
clear in 63\% of L2--L2 windows via CCTP, whereas Ethereum--Arbitrum
routes require \$50{,}000 or more for comparable viability. The
adaptive path selection algorithm outperforms standard baselines by
11\% on average---most strongly when path quality is
heterogeneous---and is robust to learning rates
$\eta \in [0.05, 0.3]$. Moderate randomization
($\phi \approx 0.4$--$0.6$) cuts MEV exposure by over 50\% with only a
modest profit loss, indicating that designers should calibrate
unpredictability to the competitive intensity of the environment
rather than maximize it.

Future work includes scaling L1--L2 data collection to weeks for formal distributional estimation, and extending the framework to multi-agent strategic settings with AMM-specific price impact~\cite{alqithami2026autonomous,marino2025giving}.

% =====================================================================
%  Bibliography
% =====================================================================
{\footnotesize
\bibliographystyle{splncs04}
\bibliography{mybibliography}
}

\end{document}